\newtheorem{theorem}{Theorem}
\newtheorem{proposition}{Proposition}
\newtheorem{remark}{Remark}
\newcommand{\A}{\mathbf A}
\newcommand{\B}{\mathbf B}
\newcommand{\ve}{\mathbf v}
\newcommand{\h}{\mathbf h}
\newcommand{\W}{\mathbf W}
\newcommand{\V}{\mathbf V}
\newcommand{\X}{\mathbf X}
\newcommand{\x}{\mathbf x}
\newcommand{\HH}{\mathbf H}
\newcommand{\0}{\mathbf 0}
\newcommand{\I}{\mathbf I}
\begin{document}
\title{{\Huge Generalized Wireless-Powered Communications: When to Activate Wireless Power Transfer? } }
\author{\IEEEauthorblockN{Qingqing Wu,  Guangchi Zhang,   \IEEEauthorblockN{Derrick Wing Kwan Ng},   Wen Chen, and Robert Schober
\thanks{Q. Wu and W. Chen  are with Department of Electronic Engineering, Shanghai Jiao Tong University, email: elewuqq@nus.edu.sg, wenchen@sjtu.edu.cn.   G. Zhang is with  the School of Information Engineering, Guangdong University of Technology, email: gczhang@gdut.edu.cn. D. W. K. Ng is with the School of Electrical Engineering and Telecommunications, The University of New South Wales, email: w.k.ng@unsw.edu.au. Robert Schober is with the Institute for Digital Communications, Friedrich-Alexander-University Erlangen-N\"urnberg (FAU), email: robert.schober@fau.de. G. Zhang is supported by NSFC 61571138, STPP of Guangdong 2017B090909006, 2018A050506015, 2019B010119001, and STPP of Guangzhou 201803030028,  201904010371.  D. W. K. Ng is supported by funding from the UNSW Digital Grid Futures Institute, UNSW, Sydney, under a cross disciplinary fund scheme and  by the Australian Research Council's Discovery Project (DP190101363). W. Chen is supported by NSFC\#61671294, and by STCSM\#16JC1402900 and Grant 17510740700. } } }
\maketitle
\begin{abstract}
Wireless-powered communication network (WPCN) is a key technology to power energy-limited massive devices, such as on-board wireless sensors in autonomous vehicles, for Internet-of-Things (IoT) applications. Conventional WPCNs rely only on dedicated downlink wireless power transfer (WPT), which is practically inefficient due to the significant energy loss in wireless signal propagation. Meanwhile, ambient energy harvesting is highly appealing as devices can scavenge energy from various existing energy sources (e.g., solar energy and cellular signals). Unfortunately, the randomness of the  availability of these energy sources cannot guarantee stable communication services. Motivated by the above, we consider a generalized WPCN  where the devices can not only harvest energy from a dedicated multiple-antenna power station (PS), but can also exploit stored energy stemming from ambient energy harvesting.  Since the dedicated WPT consumes system resources,  if the stored energy is sufficient, WPT may not be needed to maximize the weighted sum rate (WSR). To analytically  characterize this phenomenon, we derive the condition for  WPT activation and reveal how it is affected by the different system parameters. Subsequently, we further derive the optimal resource allocation policy for the cases that WPT is activated and deactivated, respectively. In particular, it is found that when WPT is activated, the optimal energy beamforming at the PS does not depend on the devices' stored energy, which is shown to lead to a new unfairness issue.  Simulation results verify our theoretical findings and demonstrate the effectiveness of the proposed optimal resource allocation.
\end{abstract}

\begin{keywords}
Energy-constrained IoT networks, when to activate WPT, energy beamforming, optimal resource allocation.
\end{keywords}
\vspace{-0.3cm}
\section{Introduction}
Future vehicles are envisioned to be equipped with massive numbers of  on-board sensors to achieve reliable inter-vehicle communications and accurate navigation. {To eliminate the inconvenience caused by conventional manual battery charging and tangled wires, wireless power transfer (WPT) has gained an unprecedented upsurge of interest due to its capability to provide devices with controllable amounts of energy via radio frequency (RF) signals  \cite{ wu2016overview,niyato2017wireless,wu2018spectral}. The advancement of vehicular networks has also fueled the development of various applications based on WPT, e.g., unmanned aerial vehicles (UAV) with WPT functionality may power Internet-of-Things (IoT) devices (e.g., sensors embedded in bridges) \cite{wu2018common}.
As such,  WPT is a promising technology for achieving self-sustainable and  scalable machine-type communications \cite{sarigiannidis2017connectivity} and will have a significant impact on future IoT networks  \cite{triantafyllou2018network}.
Recently, wireless-powered communication networks (WPCNs) have been proposed in \cite{ju14_throughput}  where a dedicated power station (PS) first broadcasts energy signals to devices for downlink (DL) WPT and then the devices exploit the harvested energy for uplink wireless information transmission (UL WIT). WPCNs have been extended to different practical scenarios such as multi-antenna \cite{liang2017online,boshkovska2017robust}, multi-carrier \cite{zhang2018wireless,vamvakas2017adaptive},  secure relay \cite{zhou2019secure}, and cognitive networks \cite{kalamkar2016resource,kang2018riding}. In particular, a distributed power control algorithm is proposed in \cite{vamvakas2017adaptive} to maximize the user energy efficiency. The fundamental tradeoff between sum-rate and fairness is revealed by studying resource allocation schemes with different design objectives in \cite{kalamkar2016resource}. However, all of these works assume that the communication system relies  solely on  WPT for its energy supply.}

\begin{figure}[!t]
\centering
\includegraphics[width=3in]{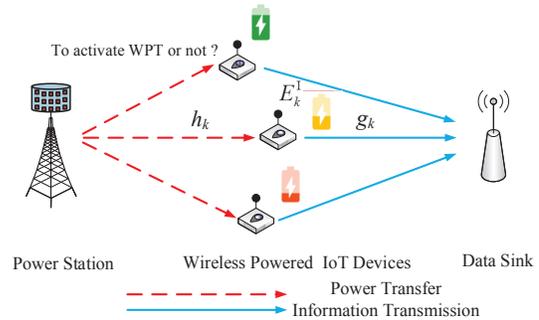}
\caption{The system model of a generalized WPCN.}\label{system:model}\vspace{-0.7cm}
\end{figure}

{Despite the evident benefits, dedicated DL WPT in WPCNs  has some disadvantages in practice. First, due to the severe signal attenuation in wireless channels, a significant amount of the energy is lost during dedicated DL WPT, which contradicts the green vision of future wireless networks \cite{wu2016overview,qing15_wpcn_twc,zhang2016fundamental}.} Second, DL WPT generally requires a dedicated  time/frequency resource.   Although the full-duplex protocol can be leveraged to support  concurrent WPT and WIT,  this may not be feasible for IoT devices due to the resulting high interference, energy consumption, and cost. In fact, the most recent narrowband IoT (NB-IoT) standard requires devices only to support half-duplex communication.   Besides relying on dedicated DL WPT that proactively generates energy signals, it is practically appealing for IoT devices to scavenge energy from ambient energy sources that are already in the surrounding environment \cite{akan2018internet}, thus incurring no additional energy consumption. For example,  devices can harvest energy from  renewable energy sources (e.g., solar) or existing RF signals (e.g.,  cellular and broadcast radio signals). However, due to the randomness and intermittence  of  ambient energy sources, devices which rely solely on ambient energy sources may not be able to harvest a sufficient amount of energy to enable  communication. {As such, designing hybrid power solutions by exploiting both dedicated WPT and ambient energy harvesting  is an interesting option for realizing sustainable and green IoT networks \cite{akan2018internet,yildiz2018hybrid,qing15_cl}. In particular, a cross-layer design strategy for incorporating different energy sources is proposed in \cite{akan2018internet}. However,  a quantitative mathematical analysis and optimization of the energy consumption and communication performance is missing. To enable efficient WPCNs,  the maximum energy consumption among all users is minimized by optimizing the transmit power and data scheduling in \cite{yildiz2018hybrid} and  the system energy efficiency is maximized by optimizing the time allocation in \cite{qing15_cl}. In both cases,  a hybrid energy harvesting model is considered. However, only numerical results are presented, which do not provide insights for system design.  In addition, energy beamforming is not exploited in \cite{akan2018internet,qing15_cl,yildiz2018hybrid}, but is very appealing for improving the efficiency of WPT.  }

{ In this paper, we propose a generalized WPCN, see Fig. 1, where each device is assumed to have a certain amount of initial energy and is  able to harvest wireless energy from a  multiple-antenna PS via DL WPT. In particular, the initial energy originates  from harvesting of ambient energy sources during the previous transmission periods and is available at the beginning of the current transmission period. To ensure resource allocation fairness among all devices, we aim to maximize the weighted sum rate (WSR) of the devices in the considered  transmission period, taking into account the device circuit power consumption which is critical for short-range applications such as IoT \cite{akan2018internet,wu2016overview}.   For this generalized system setup, we answer the following two fundamental questions: 1) When should DL WPT  be activated in  generalized WPCNs and how is this affected by the system parameters? This question naturally arises from the fact that  DL WPT may not be needed due to  the initial energy and dedicating a fixed large period to DL WPT can  be harmful to the overall performance. To the best of the authors'  knowledge, this is the first work that attempts to answer this question. 2) What is the optimal resource allocation for generalized WPCNs? Existing solutions \cite{ju14_throughput}, which consider neither the initial energy nor the device circuit power consumption, are  suboptimal for the considered problem. Based on the answers to these two questions, we show that as long as WPT is activated, both the DL  energy beamforming at the PS and the UL transmit power of the devices are independent of  the initial energy. However, adopting the WSR as objective function provides flexibility to adjust the devices' harvested energy in accordance with their initial energy.
 }

{The remainder of the paper is organized as follows. In Section II, the system model and problem formulation are presented. In Section III, we derive the  condition for WPT activation as well as the optimal resource allocation solution. Numerical results are provided in Section IV, and the paper is concluded in Section V.}
\vspace{-0.3cm}
\section{System Model and Problem Formulation}
\subsection{System Model}
As shown in Fig. \ref{system:model},  we consider a generalized WPCN where a PS  equipped with $M$ antennas is deployed to assist $K$ single-antenna IoT devices  in communicating with a single-antenna data sink (DS). {Specifically, the $K$ devices are assumed to collect energy from two energy sources, namely, an ambient energy source (e.g., a renewable energy sources or TV signals) and  dedicated DL WPT. Due to the randomness of the ambient energy source, we assume that for each device,  a certain amount of energy $E^{\rm I}_k\geq 0$ Joule (J) is available at the beginning of each transmission period. The considered model can also be applied to the case when each device in the WPCN has a conventional battery charged with a limited amount of energy, i.e., $E^{\rm I}_k$ is drawn from the battery in each transmission period, and WPT is used to prolong the battery lifetime.}
Furthermore, when DL WPT is activated to replenish the energy of the devices,  the ``harvest and then transmit'' scheme is adopted \cite{ju14_throughput}, i.e.,  the  devices first harvest energy from the signal sent from the PS in the DL and then transmit information to the DS in the UL.    For low-complexity implementation as well as low circuit power consumption, it is assumed that all devices adopt  the time division multiple access (TDMA) protocol \cite{ju14_throughput}. In addition, we assume a quasi-static flat-fading channel model for all involved channels where  the channel state information (CSI) is perfectly known at the PS. { In particular,  the DL channel from the PS to device $k$ and the UL channel from device $k$ to the DS are denoted by $\h^H_k\in \mathbb{C}^{1\times M}$ and $g_k$, respectively.}

During the DL WPT phase, the PS broadcasts an energy signal ${\x}\in \mathbb{C}^{M\times 1}$ for a duration $\tau_0$ where  the {transmit covariance matrix ${\W}= \mathbb{E}({\x}{\x}^H)$} is subject to the maximum transmit power constraint ${\rm{Tr}}({\W})\leq P_{\max}$. Here,   $\mathbb{E}(\cdot)$ and $\rm{Tr}(\cdot)$ denote  statistical  expectation and the trace  of a matrix, respectively. By ignoring the negligible receiver noise power for energy harvesting, {the amount of energy harvested at device $k$ under a linear energy harvesting model\footnote{{Based on Fig. 3 in \cite{boshkovska2017robust}, one can  verify that the linear energy harvesting model is appropriate for the system parameters considered in Section IV. In practice, adopting a non-linear energy harvesting model  is more general and constitutes an interesting topic for future research.} } \cite{ju14_throughput,liang2017online,zhang2018wireless,kalamkar2016resource,mishra20172} can be expressed as
\begin{align}\label{eq3}
E^h_k=\eta_k\tau_0 \mathbb{E}(\h_k^H{\x}{\x}^H\h_k) =\eta_k\tau_0{\rm{Tr}}(\W \HH_k),
\end{align}}where $\eta_k \in (0,1]$ is the energy harvesting efficiency \cite{ju14_throughput} of device $k$ and {$\HH_k=\h_k\h_k^H$}. During the UL WIT phase, each energy harvesting device transmits an independent information signal to the DS for a duration of $\tau_k$ and with transmit power $p_k$. Accordingly, the achievable rate of device $k$ in bits/Hz  can be expressed as
\begin{align}\label{eq6}
r_k=\tau_k \log_2\left(1+p_k\gamma_k\right),
\end{align}
where $\gamma_k= \frac{|g_k|^2}{\sigma^2}$ is the normalized UL channel gain of device $k$ and  $\sigma^2$ is the additive white Gaussian noise power at the DS. Besides the transmit power, a constant circuit power accounting for the operation of the electronic circuits of the transmitter  (e.g., mixer and frequency synthesizers) is also consumed at device $k$, and is denoted by $p_{{\rm{c}},k}$ \cite{wu2016overview,qing15_wpcn_twc,zhang2016fundamental}.
\begin{remark}
{The considered model can be applied to  UAV-enabled WPCNs where a UAV PS is deployed at a suitable location to power energy-limited IoT devices within a certain area \cite{park2018minimum,zeng2019accessing,suman2018uav,suman2018path,wu2018joint}.}
\end{remark}

\vspace{-0.3cm}
\subsection{Problem Formulation}
{Our objective is to maximize the WSR of all devices, i.e., $R_{s}=\sum_{k=1}^Kw_kr_k$,  where  $w_k$ denotes the weight of device $k$. By varying the values of the weights, the system designer is able to set different priorities and enforce certain notions of fairness among devices. Specifically, we jointly optimize the DL and UL time allocation, the transmit covariance matrix at the PS, and the power control at the devices, which leads to the following optimization problem:}
\begin{align}\label{eq10}
 \mathop {\max }\limits_{{\tau_{0}, \{\tau_{k}\},\{p_{k}\}, \W } }&~~ \sum_{k=1}^{K}w_k\tau_k\log_2\left(1+p_k\gamma_k\right) \nonumber \\
\text{s.t.} ~~~~~&  \text{C1:}~ \left({p_k}+p_{{\rm{c}},k}\right)\tau_{k}\leq \eta_k\tau_0{\rm{Tr}}(\W \HH_k) +E^{\rm I}_k, ~ \forall k, \nonumber \\
&\text{C2:}~{\rm{Tr}}(\W)\leq P_{\mathop{\max}}, \W\succeq \0, \nonumber \\
&\text{C3:}~\tau_{0}+\sum_{k=1}^{K}\tau_k\leq T, \nonumber \\
& \text{C4:}~\tau_{0}\geq0, ~  \tau_k\geq  0, p_k\geq  0, ~\forall k.
\end{align}
In \eqref{eq10}, C1 ensures that the total energy consumed at  each device cannot exceed the total available energy and C2 reflects the maximum transmit power constraint at the PS.  C3 and C4 are the total transmission time constraint and the non-negative constraints on the optimization variables, respectively. Note that when $\tau_0=0$, this means that WPT is not activated and the devices only perform UL WIT.
 However, problem \eqref{eq10} is non-convex due to the coupling of the optimization variables in C1. To circumvent this difficulty,   we apply a changes of variable as  $e_k=\tau_kp_k$ and $\V= \tau_0\W$ and rewrite problem (\ref{eq10})  as
\begin{align}\label{eq11}
 \mathop {\max }\limits_{{\tau_{0}, \{\tau_{k}\},\{e_{k}\}, \V } }&~~\sum_{k=1}^{K} w_k \tau_k\log_2\left(1+\frac{e_k}{\tau_k}\gamma_k\right) \nonumber \\
\text{s.t.} ~~~~~&  \text{C1:}~ {e_k}+p_{{\rm{c}},k}\tau_{k}\leq \eta_k{\rm{Tr}}(\V \HH_k) +E^{\rm I}_k, ~ \forall k, \nonumber \\
& \text{C2:}~ {\rm{Tr}}(\V)\leq \tau_0P_{\mathop{\max}}, \V\succeq \0, \nonumber \\
&\text{C3}, ~ \text{C4:}~\tau_{0}\geq0, ~  \tau_k\geq  0, ~e_k\geq  0, ~\forall k.
\end{align}
Although (\ref{eq11}) is a convex optimization problem that can be solved by standard solvers, e.g., the interior point method,  these numerical approaches cannot provide any useful insights into the optimal solution. In the next section, we exploit the inherent structure of problem \eqref{eq11} to reveal how the system can maximize the WSR, which also results in an optimal and efficient solution.

\section{Optimal Solution}
 In this section, we first study some properties of the optimal  energy and time utilization as well as the user scheduling in the generalized WPCNs. Then, we derive the DL WPT activation condition as well as the optimal solution to problem \eqref{eq11}.

{First,  for generalized WPCNs,  all devices satisfying $ \eta_k{\rm{Tr}}(\V \HH_k) +E^{\rm I}_k>0$, $\forall k$, will be scheduled  to completely deplete their energy in UL WIT, i.e., $\tau_k>0$ and $ \left({p_k}+p_{{\rm{c}},k}\right)\tau_k=  \eta_k{\rm{Tr}}(\V \HH_k) +E^{\rm I}_k$. This is because the energy in different devices cannot be shared and thus scheduling more devices and depleting their energy always helps increase the WSR. Second,  the total available transmission time is always exhausted for DL WPT and UL WIT, i.e., $\tau_{0}+\sum_{k=1}^{K}\tau_k= T$, because  if  $\tau_{0}+\sum_{k=1}^{K}\tau_k<T$,  the WSR can be further improved by properly increasing $\tau_{0}$ and $\tau_k$, $\forall k$. The above two properties will be used to simplify problem \eqref{eq11} in the following. }

\subsection{When to Activate DL WPT?}
Intuitively, if the available initial energy is sufficient at the IoT devices, the DL WPT does not need to be activated. Now, we reveal the optimal  condition for activating WPT.
\begin{proposition}\label{theom_activation}
For generalized WPCNs, DL WPT will be activated \emph{if and only if} the following condition is satisfied,
{\begin{align}\label{eq_activate401}
T>  \sum_{k=1}^{K}\frac{E^{\rm I}_k}{p^*_k +p_{c,k}},
\end{align}}
 where $p^*_k$ is the unique root of
{ \begin{align}\label{eq_activate402}
\mathcal{G}^{\rm on}_k(p_k) & \triangleq w_k\log_2\left(1 + p_k\gamma_k\right)-\frac{w_k(p_k+p_{{\rm c},k})\gamma_k}{(1+p_k\gamma_k)\ln2}  \nonumber\\
&\quad\, -  P_{\mathop{\max}}\psi\left(\sum_{k=1}^{K} \frac{w_k\eta_k\gamma_k}{(1 + p_k\gamma_k)\ln2}{\HH}_k\right) = 0
 \end{align}}
and $\psi(\X)$ denotes the maximum eigenvalue of matrix $\X$.
\end{proposition}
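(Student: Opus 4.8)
The plan is to exploit the convexity of \eqref{eq11}: since it is a convex program (and Slater's condition holds), $\tau_0=0$ is optimal \emph{if and only if} the KKT system admits a solution with $\tau_0=0$, which by C2 forces $\V=\0$ (a PSD matrix of zero trace). First I would introduce multipliers $\lambda_k\ge0$ for C1, $\mu\ge0$ for ${\rm Tr}(\V)\le\tau_0P_{\max}$, $\nu\ge0$ for C3, and $\mathbf{Z}\succeq\0$ for $\V\succeq\0$, and record the stationarity conditions: in $\V$, $\mathbf{Z}=\mu\I-\sum_j\lambda_j\eta_j\HH_j\succeq\0$, hence $\mu\ge\psi(\sum_j\lambda_j\eta_j\HH_j)$; in $e_k$ (for a scheduled device, $e_k,\tau_k>0$), $\lambda_k=\tfrac{w_k\gamma_k}{(1+p_k\gamma_k)\ln2}$ with $p_k\triangleq e_k/\tau_k$; in $\tau_k$, after eliminating $\lambda_k$, $\nu=f_k(p_k)$ where $f_k(p)\triangleq w_k\log_2(1+p\gamma_k)-\tfrac{w_k(p+p_{{\rm c},k})\gamma_k}{(1+p\gamma_k)\ln2}$; and in $\tau_0$ at its lower bound, $\mu P_{\max}-\nu\le0$. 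Since (generically) some scheduled device has $\h_j\neq\0$, we get $\mu\ge\psi(\cdot)>0$, hence $\nu\ge\mu P_{\max}>0$, so C3 is tight; combined with the already-established property $e_k+p_{{\rm c},k}\tau_k=E^{\rm I}_k$, i.e.\ $\tau_k=E^{\rm I}_k/(p_k+p_{{\rm c},k})$, the time constraint reads $\sum_kE^{\rm I}_k/(p_k+p_{{\rm c},k})=T$. Because $\V=\0$ leaves $\mu$ free above $\psi(\cdot)$, the $\tau_0$-condition is satisfiable exactly when $\nu\ge P_{\max}\psi(\sum_j\lambda_j\eta_j\HH_j)$, i.e.\ $\mathcal{G}^{\rm on}_k(p_k)\ge0$; conversely these relations are readily seen to form a complete KKT point with $\tau_0=0$. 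Hence \emph{deactivation is optimal iff there is a $\nu>0$ with $\sum_kE^{\rm I}_k/(p_k(\nu)+p_{{\rm c},k})=T$ and $\mathcal{G}^{\rm on}_k$ (evaluated at $p_j=p_j(\nu)$) $\ge0$}, where $p_k(\nu)$ is introduced next.

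Next I would convert this into the threshold on $T$ in \eqref{eq_activate401} by a monotonicity argument. A one-line derivative gives $f_k'(p)=\tfrac{w_k\gamma_k^2(p+p_{{\rm c},k})}{(1+p\gamma_k)^2\ln2}>0$, and since $f_k(0)\le0$ and $f_k(p)\to\infty$, for each $\nu>0$ there is a unique $p_k(\nu)>0$ with $f_k(p_k(\nu))=\nu$, strictly increasing in $\nu$. Then $\Phi(\nu)\triangleq\sum_kE^{\rm I}_k/(p_k(\nu)+p_{{\rm c},k})$ is strictly decreasing, while $\lambda_k(\nu)\triangleq\tfrac{w_k\gamma_k}{(1+p_k(\nu)\gamma_k)\ln2}$ is nonincreasing, so $\sum_j\lambda_j(\nu)\eta_j\HH_j$ is nonincreasing in the positive semidefinite (L\"owner) order; as the largest eigenvalue is order-preserving, $g(\nu)\triangleq\nu-P_{\max}\psi(\sum_j\lambda_j(\nu)\eta_j\HH_j)$ --- which is precisely $\mathcal{G}^{\rm on}_k$ evaluated along $p_j=p_j(\nu)$ --- is strictly increasing, negative for small $\nu$ and positive for large $\nu$, so it has a unique zero $\nu^*>0$; the tuple $p_k^*\triangleq p_k(\nu^*)$ is then the unique solution of the coupled system $\mathcal{G}^{\rm on}_k(p_k^*)=0$, $\forall k$, in \eqref{eq_activate402}. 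Deactivation-optimality now reads: some $\nu>0$ satisfies $\Phi(\nu)=T$ and $\nu\ge\nu^*$. As $\Phi$ is strictly decreasing, when the (unique) $\nu_T$ with $\Phi(\nu_T)=T$ exists it obeys $\nu_T\ge\nu^*\iff T=\Phi(\nu_T)\le\Phi(\nu^*)=\sum_kE^{\rm I}_k/(p_k^*+p_{{\rm c},k})$; and if $T>\Phi(0^+)$ no such $\nu_T$ exists (deactivation is KKT-infeasible), while $f_k(p_k^*)=P_{\max}\psi(\cdot)>0=f_k(p_k^0)$ forces the root $p_k^0$ of $f_k=0$ to satisfy $p_k^0<p_k^*$, whence $T>\Phi(0^+)>\sum_kE^{\rm I}_k/(p_k^*+p_{{\rm c},k})$. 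In all cases \emph{deactivation is optimal iff $T\le\sum_kE^{\rm I}_k/(p_k^*+p_{{\rm c},k})$}, i.e.\ \eqref{eq_activate401} is exactly the activation condition.

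The main obstacle I anticipate is the passage from the matrix KKT condition to the scalar threshold: one must notice that with $\V=\0$ the multiplier $\mu$ is unconstrained above $\psi(\sum_j\lambda_j\eta_j\HH_j)$, then parametrise the candidate deactivated solutions by the single scalar $\nu$ (the time-budget multiplier), and finally prove that $\psi(\sum_j\lambda_j(\nu)\eta_j\HH_j)$ is monotone in $\nu$ via order-monotonicity of the maximum eigenvalue --- this is what makes $p_k^*$ well defined despite \eqref{eq_activate402} being a \emph{coupled} $K$-equation system, and what identifies $\sum_kE^{\rm I}_k/(p_k^*+p_{{\rm c},k})$ as the UL transmission time at the instant WPT switches on. Two routine points are dealt with along the way: a device with $E^{\rm I}_k=0$ (when WPT is off) contributes $0$ to the right-hand side of \eqref{eq_activate401} and can be dropped, and if all $E^{\rm I}_k=0$ the right-hand side is $0<T$, so WPT is always activated, consistent with the claim.
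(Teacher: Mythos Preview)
Your proof is correct and rests on the same foundation as the paper's --- KKT analysis of the convex reformulation \eqref{eq11} --- but you organize the argument differently and more completely. The paper (Appendix~A) works from the activation side: it assumes $\tau_0>0$, sets $\partial\mathcal{L}/\partial\tau_0=0$ to get $\mu P_{\max}=\delta$, argues that boundedness of the Lagrangian in $\V$ forces $\psi(\B)=\mu$ (else $\V=\0$, a contradiction), and then reads off \eqref{eq_activate401} from the tight time constraint; the converse (``if \eqref{eq_activate401} holds then $\tau_0>0$ is KKT-feasible'') is asserted in one sentence. You instead characterize deactivation directly, observing that with $\V=\0$ the multiplier $\mu$ is free above $\psi(\sum_j\lambda_j\eta_j\HH_j)$, so the boundary condition on $\tau_0$ reduces to $\nu\ge P_{\max}\psi(\cdot)$. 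Your main addition is the $\nu$-parametrization together with the L\"owner-order monotonicity of $\psi$, which rigorously establishes that the coupled system \eqref{eq_activate402} has a \emph{unique} solution $p_k^*$ and that the deactivation region is exactly $T\le\Phi(\nu^*)$; the paper states uniqueness in the proposition but does not prove it in Appendix~A (the monotonicity ingredient appears only later, in the proof of Proposition~\ref{monotonic1}). Both routes yield the same threshold; yours buys a cleaner ``if and only if'' and a self-contained justification of the uniqueness claim.
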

\begin{proof}
 Please refer to  Appendix A.
\end{proof}

{Proposition \ref{theom_activation} explicitly answers the question when DL WPT is needed to maximize the WSR. The result in \eqref{eq_activate401} confirms  the intuition that less initial energy at the devices, i.e., smaller $E^{\rm I}_k$,  will make WPT more likely to be activated and vice versa. For example, in the extreme case when all devices do not have any initial energy, i.e., $E^{\rm I}_k=0$, $\forall k$, \eqref{eq_activate401} always holds and thus DL WPT is always activated, which corresponds to conventional WPCNs that only rely on DL WPT \cite{ju14_throughput, liang2017online,zhang2018wireless}. Note that the $K$ transcendental equations in   \eqref{eq_activate402} need to be solved jointly in order to obtain the $K$ variables, $p^*_k$'s, and are difficult to solve in general. The key observation is that $\psi(\sum_{k=1}^{K} \frac{w_k\eta_k\gamma_k}{(1 + p_k\gamma_k)\ln2}{\HH}_k)$ is a common term for all $K$ devices and $\mathcal{G}^{\rm on}_k(p_k)$  increases monotonically with $p_k$, $\forall k$. As such, $p^*_k$ can be obtained readily by applying the bisection method.
 \begin{remark}\label{remark1}
The DL WPT activation condition in (\ref{eq_activate401}) has an interesting interpretation. In particular, given the hardware parameters (i.e.,  $P_{\max}$,  $\eta_k$,
$p_{{\rm c},k}$) as well as  the channel parameters (i.e., $\HH_k$ and $\gamma_k$), the system can calculate a ``virtual'' transmit power $p^*_k$ for each device $k$ according to (\ref{eq_activate402}). By letting the devices transmit at this virtual power,  the UL WIT time needed for depleting all initial energy can be obtained as $\widehat{T}\triangleq\sum_{k=1}^{K}\frac{E^{\rm I}_k}{p^*_k+p_{{\rm c}, k}}$. Thus,  if $T> \widehat{T}$, this  means that the initial energy is not sufficient and  DL WPT should be turned on in the rest of the transmission time period so as to transfer more energy to the devices for achieving a higher WSR. In contrast, if $T\leq  \widehat{T}$, this means that the initial energy is already sufficient for UL WIT and hence DL WPT should be turned off to save transmission time for UL WIT. Therefore, whether DL WPT is activated or not,  fundamentally depends on whether the available transmission time period or the available initial energy are the bottleneck for the WSR of the considered system.
\end{remark}

It is worth pointing out that the explicit expressions in \eqref{eq_activate401} and \eqref{eq_activate402} also facilitate the characterization of the impact of the hardware parameters on the DL WPT activation.
Let $\mathcal{F}( P_{\max}, \eta_k, p_{{\rm c}, k}) \triangleq T-  \sum_{k=1}^{K}\frac{E^{\rm I}_k}{p^*_k +p_{{\rm c},k}}$.
\begin{proposition}\label{monotonic1}
$\mathcal{F}( P_{\max}, \eta_k, p_{{\rm c}, k})$ is an increasing function of  $P_{\max}$, $\eta_k$,  and $p_{{\rm c}, k}$, respectively.
\end{proposition}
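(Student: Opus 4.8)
The plan is to collapse the $K$ coupled transcendental equations \eqref{eq_activate402} into a single scalar fixed‑point equation and then argue by monotone comparative statics. The starting observation is that the spectral‑norm term in \eqref{eq_activate402} is \emph{common} to all $K$ devices: writing $\varphi_k(p)\triangleq w_k\log_2(1+p\gamma_k)-\frac{w_k(p+p_{{\rm c},k})\gamma_k}{(1+p\gamma_k)\ln2}$ for the part of $\mathcal G^{\rm on}_k$ that does not contain $\psi(\cdot)$, and setting $t^*\triangleq P_{\max}\,\psi\!\bigl(\sum_{j=1}^{K}\frac{w_j\eta_j\gamma_j}{(1+p^*_j\gamma_j)\ln2}\HH_j\bigr)>0$, condition \eqref{eq_activate402} is exactly $\varphi_k(p^*_k)=t^*$ for every $k$. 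Since $\varphi'_k(p)=\frac{w_k\gamma_k^2(p+p_{{\rm c},k})}{(1+p\gamma_k)^2\ln2}>0$ and $\varphi_k(0)=-\frac{w_kp_{{\rm c},k}\gamma_k}{\ln2}\le 0<t^*$, each $\varphi_k$ is a strictly increasing bijection onto $[\varphi_k(0),\infty)$, so $p^*_k=\varphi_k^{-1}(t^*)$ is a well‑defined increasing function of the single scalar $t^*$; substituting back, $t^*$ is the unique root of $t=P_{\max}\Psi(t)$, where $\Psi(t)\triangleq\psi\!\bigl(\sum_{j=1}^{K}\frac{w_j\eta_j\gamma_j}{(1+\varphi_j^{-1}(t)\gamma_j)\ln2}\HH_j\bigr)$. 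Because every $\varphi_j^{-1}$ is increasing and every $\HH_j\succeq\0$, the matrix inside $\psi(\cdot)$ is nonincreasing in $t$ in the positive‑semidefinite order, so $\Psi$ is continuous, positive, and strictly decreasing with $\Psi(t)\to0$ as $t\to\infty$; hence the curve $t\mapsto P_{\max}\Psi(t)$ crosses the identity line at exactly one point $t^*$, and $t^*$ moves in the direction of any pointwise upward shift of that curve.

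Given this reduction, $P_{\max}$ and $\eta_k$ are handled in one stroke. Increasing $P_{\max}$ rescales $t\mapsto P_{\max}\Psi(t)$ upward pointwise (as $\Psi>0$), and increasing $\eta_k$ enlarges one positive‑semidefinite summand of the matrix inside $\psi(\cdot)$, hence raises $\Psi(t)$ pointwise; in both cases $t^*$ increases. Since $p^*_k=\varphi_k^{-1}(t^*)$ is increasing in $t^*$ for every $k$, while $\varphi_k$ itself contains neither $P_{\max}$ nor $\eta_j$, every denominator $p^*_k+p_{{\rm c},k}$ in $\widehat T\triangleq\sum_{k=1}^{K}\frac{E^{\rm I}_k}{p^*_k+p_{{\rm c},k}}$ grows, so $\widehat T$ decreases and $\mathcal F=T-\widehat T$ increases.

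The delicate case is $p_{{\rm c},k}$, and I expect it to be the main obstacle, because it enters $\widehat T$ through two opposing channels: \emph{directly}, in the denominator of the $k$th term (favourable), and \emph{indirectly}, since raising $p_{{\rm c},k}$ lowers $\varphi_k$, hence raises $\varphi_k^{-1}$, hence lowers $\Psi$ and therefore $t^*$, which pulls every $p^*_j$ downward (unfavourable for the terms $j\neq k$). The plan is to expose the direct effect via the change of variable $x_k\triangleq1+p^*_k\gamma_k$: multiplying $\varphi_k(p^*_k)=t^*$ by $\frac{x_k\ln2}{w_k}$ gives $\gamma_k(p^*_k+p_{{\rm c},k})=x_k\bigl(\ln x_k-\tfrac{t^*\ln2}{w_k}\bigr)$, and eliminating $t^*$ through $t^*=P_{\max}\Psi(t^*)$ — which is transparent when $K=1$, or when all devices share the same $(w_k,\eta_k,\gamma_k,p_{{\rm c},k})$ — yields a relation $\phi(x_k)=c_0+\gamma_kp_{{\rm c},k}$ with $\phi(x)\triangleq x\ln x-x$ strictly increasing on $(1,\infty)$ and $c_0$ independent of $p_{{\rm c},k}$; then $x_k$, and with it $p^*_k+p_{{\rm c},k}=\gamma_k^{-1}(x_k-1)+p_{{\rm c},k}$, increases with $p_{{\rm c},k}$, forcing $\widehat T$ down. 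Carrying this last step through the fully coupled multi‑device setting — i.e., quantitatively showing the direct denominator gain dominates the indirect contraction of $t^*$, equivalently controlling the reaction of the common maximum eigenvalue $\psi(\cdot)$ to a perturbation of a single $\varphi_j^{-1}$ — is the crux of the argument and the step I would expect to require the most care.
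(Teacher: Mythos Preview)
Your fixed-point reduction is the paper's argument made precise. The paper notes (just above the proof) that the spectral term in \eqref{eq_activate402} is common to all devices and that each $\mathcal{G}_k^{\rm on}$ is monotone in $p_k$; its proof then argues directly that $\mathcal{G}_k^{\rm on}(p_k)$ increases in $p_k$ and decreases in $P_{\max}$, so the root $p_k^*$ increases with $P_{\max}$, and hence so does $\mathcal F$. Your version---naming the common value $t^*=P_{\max}\psi(\cdot)$, writing $p_k^*=\varphi_k^{-1}(t^*)$, and tracking $t^*$ through the scalar equation $t=P_{\max}\Psi(t)$---is the same idea but handles the coupling among the $K$ equations explicitly, which buys a clean simultaneous conclusion for all $p_k^*$ rather than a one-equation heuristic. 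For $P_{\max}$ and $\eta_k$ the two arguments coincide and both are correct.

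On $p_{{\rm c},k}$ you are right that this is the delicate case, and your diagnosis of the two opposing channels is accurate. Be aware, however, that the paper does \emph{not} work through this case either: it proves only the $P_{\max}$ part and asserts that ``the other cases can be similarly shown.'' Your special-case calculation ($K=1$, or fully symmetric devices) is correct and is exactly what the paper's one-equation monotonicity argument would yield if applied naively. But the fully coupled heterogeneous case that you flag as the crux is left open in both your proposal and the paper's proof. The indirect effect you identify---raising $p_{{\rm c},k}$ lowers $t^*$ and hence lowers $p_j^*$ for every $j\neq k$---genuinely \emph{increases} the $j$th summand of $\widehat{T}$; when $E_k^{\rm I}$ is small relative to the other $E_j^{\rm I}$, it is not clear a priori that the direct gain in the $k$th denominator dominates. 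So the gap you describe is real and is not a deficiency of your route relative to the paper's; the paper simply does not confront it.
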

\begin{proof}
Due to page limitation, we only provide the proof for the case of $P_{\max}$, while the other cases can be similarly shown. Since  $\mathcal{F}$ does not explicitly depend on $P_{\max}$, we first need to identify the relationship between $P_{\max}$ and $p^*_k$ based on \eqref{eq_activate402}. Since $\sum_{k=1}^{K} \frac{w_k\eta_k\gamma_k}{(1 + p_k\gamma_k)\ln2}{\HH}_k\succeq \0$, we have
{\begin{align}
 \frac{w_k\eta_k\gamma_k}{(1 + p'_k\gamma_k)\ln2}{\HH}_k \preceq  \frac{w_k\eta_k\gamma_k}{(1 + p_k\gamma_k)\ln2}{\HH}_k
\end{align}}
for $p'_k\geq p_k$. Note that $\psi(\X)\geq\psi(\X') $ if $\X \succeq \X' \succeq\0$. As such,  the last term of $\mathcal{G}^{\rm on}_k(p_k)$ is monotonically non-decreasing with respect to $p_k$. In addition, it can be easily verified that the first two terms of $\mathcal{G}^{\rm on}_k(p_k)$ monotonically increase with $p_k$.  As such, it can be concluded that $\mathcal{G}^{\rm on}_k(p_k)$ is an increasing function of $p_k$. On the other hand, it is easy to observe that $\mathcal{G}^{\rm on}_k(p_k)$ decreases with  $P_{\max}$. Thus, the solution of $\mathcal{G}^{\rm on}_k(p_k)=0$, i.e., $p^*_k$, monotonically increases with $P_{\max}$, which thus also holds for $\mathcal{F}( P_{\max}, \eta_k, p_{{\rm c}, k})$.
\end{proof}

The insight behind Proposition \ref{monotonic1} is that as $P_{\max}$, $\eta_k$, and/or $p_{{\rm c}, k}$ become larger, it becomes more likely that DL WPT is activated.  This also agrees with intuition. On the one hand, higher $P_{\max}$ and  $\eta_k$ allow the devices to harvest more energy from DL WPT, which in turn effectively reduces the time spent on dedicated DL WPT for harvesting a certain required amount of energy. In the extreme case, when $P_{\max}$ is sufficiently large, then the time needed for DL WPT is negligibly small and activating DL WPT is always desirable for maximizing the WSR.
On the other hand, a larger $p_{{\rm c}, k}$ implies more energy consumption in UL WIT, which renders the initial energy more likely to be insufficient and activating DL WPT becomes necessary.

\subsection{Optimal Resource Allocation}
Now, we study the optimal resource allocation as follows:
\begin{theorem}\label{theorem:2}
If DL WPT is activated, the optimal transmit covariance matrix and  time allocation are given by
{ \begin{align}
   \W^*& =P_{\max}\ve\ve^{H}, ~~ \tau^*_0 =\frac{T - {\sum_{k=1}^{K}\frac{E^{\rm I}_k\gamma_k}{p^*_k+p_{{\rm c},k}}}}{1 +\sum_{k=1}^{K}\frac{\eta_k{ {\rm{Tr}} (\W^*{\HH}_k)}\gamma_k }{p^*_k+p_{{\rm c},k}}},  \label{tau0}\\
  \tau^*_k& = \frac{\eta_k{\rm{Tr}}(\W^* \HH_k)\tau^*_0 +E^{\rm I}_k}{p^*_k+p_{{\rm c},k}}, \forall k, \label{tauk}
 \end{align}}where $p^{\star}_k$ is the  root in \eqref{eq_activate402} and $\ve$ is the principle eigenvector of matrix  $\sum_{k=1}^{K} w_k \frac{\eta_k\gamma_k}{ (1 + p^*_k\gamma_k)\ln2}\mathbf{H}_k$. In contrast, if DL WPT is not activated, the optimal time allocation  is given by
  {\begin{align}
 \tau_0^{\star} =0, ~~  \tau^{\star}_k = \frac{E^{\rm I}_k}{p^{\star}_k+ p_{{\rm{c}},k} }, \forall k, \label{tauk10}
 \end{align} }
  where $p^{\star}_k$, $\forall k$, is the unique root of
{ \begin{align}\label{eq_activate23}
\mathcal{G}^{\rm off}_k(p_k) \triangleq w_k\ln(1+p_{k}\gamma_k)-\frac{w_k(p_k+p_{{\rm c},k})\gamma_k}{(1+p_k\gamma_k)\ln2}=\delta^{\star}
 \end{align} }
and $\sum_{k=1}^{K}\frac{E^{\rm I}_k}{(\mathcal{G}^{\rm off}_k)^{-1}(\delta^{\star})+ p_{{\rm{c}},k}  }=T$.
\end{theorem}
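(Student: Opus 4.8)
The plan is to solve the convex problem~\eqref{eq11} through its Karush--Kuhn--Tucker (KKT) conditions. Since~\eqref{eq11} is convex and admits a strictly feasible point (take any small $\tau_0>0$, $\{\tau_k>0\}$, $\{e_k>0\}$ and $\V=\epsilon\I\succ\0$ with $\epsilon$ small enough that $M\epsilon<\tau_0P_{\max}$), Slater's condition holds, so the KKT conditions are necessary and sufficient for global optimality and it suffices to exhibit a primal--dual pair satisfying them. As a first step I would invoke the two structural properties established just before Section~III-A to replace C1 and C3 by equalities. I would then attach multipliers $\lambda_k\ge0$ to C1, $\mu\ge0$ to ${\rm Tr}(\V)\le\tau_0P_{\max}$, a matrix multiplier to $\V\succeq\0$, $\nu\ge0$ to C3, and $\beta_k,\alpha_k,\rho\ge0$ to $e_k\ge0$, $\tau_k\ge0$, $\tau_0\ge0$, and record the stationarity conditions in $e_k$, $\tau_k$, $\tau_0$, $\V$ together with complementary slackness. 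A preliminary observation is that $e_k>0$ at any optimum (otherwise a device consumes only circuit power, wasting UL time with no rate gain), so $\beta_k=0$ and the $e_k$-condition yields $\lambda_k=\frac{w_k\gamma_k}{(1+p_k\gamma_k)\ln2}$ with $p_k\triangleq e_k/\tau_k$.

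Second, I would dispose of the $\V$-subproblem. The Lagrangian is \emph{linear} in $\V$, its $\V$-dependent part being ${\rm Tr}\!\big(\V(\sum_k\lambda_k\eta_k\HH_k-\mu\I)\big)$ over $\V\succeq\0$; boundedness from above forces the dual-feasibility condition $\mu\I\succeq\sum_k\lambda_k\eta_k\HH_k$, whence $\mu=\psi(\sum_k\lambda_k\eta_k\HH_k)>0$ (as $\lambda_k>0$ and $\HH_k\succeq\0$ is nonzero), so by complementary slackness C2 is active, ${\rm Tr}(\V)=\tau_0P_{\max}$, and the maximizing $\V$ must be supported on the top eigenspace of $\sum_k\lambda_k\eta_k\HH_k$; taking it rank one along the principal eigenvector $\ve$ gives $\V^*=\tau_0P_{\max}\ve\ve^H$. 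Since $\V=\tau_0\W$, this yields $\W^*=P_{\max}\ve\ve^H$ whenever $\tau_0>0$, and by the value of $\lambda_k$ above, $\ve$ is the principal eigenvector of $\sum_k w_k\frac{\eta_k\gamma_k}{(1+p_k\gamma_k)\ln2}\HH_k$, as claimed.

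Third, I would combine the $\tau_0$- and $\tau_k$-stationarity conditions. The $\tau_0$-condition is $\mu P_{\max}-\nu+\rho=0$; in the activated case $\tau_0>0$ gives $\rho=0$, so $\nu=\mu P_{\max}=P_{\max}\psi(\sum_k\lambda_k\eta_k\HH_k)$. Substituting $\lambda_k$ and this $\nu$ into the $\tau_k$-condition (with $\alpha_k=0$ since $\tau_k>0$) collapses it to exactly $\mathcal{G}^{\rm on}_k(p_k)=0$ of~\eqref{eq_activate402}, whose monotonicity and unique root $p^*_k$ are already supplied by Proposition~\ref{theom_activation}. Finally, the active C1, $(p^*_k+p_{{\rm c},k})\tau_k=\eta_k{\rm Tr}(\W^*\HH_k)\tau_0+E^{\rm I}_k$, yields~\eqref{tauk}, and substituting this into the active C3, $\tau_0+\sum_k\tau_k=T$, and solving the resulting linear equation for $\tau_0$ yields~\eqref{tau0}; positivity of $\tau^*_0$ under the activation condition of Proposition~\ref{theom_activation} completes the consistency check.

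For the deactivated case I would set $\tau_0=0$, which forces $\V=\0$ (from ${\rm Tr}(\V)\le\tau_0P_{\max}$ and $\V\succeq\0$) and decouples C1 into $e_k+p_{{\rm c},k}\tau_k\le E^{\rm I}_k$; now $\nu\ge0$ is no longer tied to $P_{\max}$ but is a free multiplier, denoted $\delta^\star$, fixed by C3. With $\V=\0$ the $\tau_k$-condition reduces to $\mathcal{G}^{\rm off}_k(p_k)=\delta^\star$ of~\eqref{eq_activate23}; since $\mathcal{G}^{\rm off}_k$ is strictly increasing (routine differentiation), it is invertible and $p^\star_k=(\mathcal{G}^{\rm off}_k)^{-1}(\delta^\star)$; the active C1 then gives $\tau^\star_k=E^{\rm I}_k/(p^\star_k+p_{{\rm c},k})$, i.e.~\eqref{tauk10}, and the active C3 $\sum_k\tau^\star_k=T$ becomes $\sum_k\frac{E^{\rm I}_k}{(\mathcal{G}^{\rm off}_k)^{-1}(\delta^\star)+p_{{\rm c},k}}=T$, which has a unique root $\delta^\star$ by monotonicity of the left-hand side. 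I expect the main obstacle to be the semidefinite $\V$-subproblem step --- establishing dual feasibility over the PSD cone and the rank-one/principal-eigenvector characterization of the optimal $\V$ --- together with the complementary-slackness bookkeeping needed to exclude the degenerate boundary cases ($e_k=0$, $\tau_k=0$, $\V$ not extremal) and to import the existence and uniqueness of the coupled root system $\{\mathcal{G}^{\rm on}_k(p_k)=0\}_k$ from Proposition~\ref{theom_activation}.
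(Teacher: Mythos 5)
Your proposal is correct and takes essentially the same route as the paper's Appendix B: a KKT analysis of the convex reformulation \eqref{eq11}, handling the $\V$-subproblem via linearity of the Lagrangian, PSD dual feasibility, and complementary slackness to obtain the rank-one beamformer $\W^*=P_{\max}\ve\ve^H$ along the principal eigenvector of $\sum_k \lambda_k\eta_k\HH_k$, then recovering $\tau_0^*,\tau_k^*$ from the active energy and time constraints, and treating the deactivated case with $\tau_0=0$, the relaxed $\tau_0$-stationarity, and $\mathcal{G}^{\rm off}_k(p_k)=\delta^{\star}$. The only nuance is that boundedness of the Lagrangian alone yields $\mu\geq\psi\bigl(\sum_k\lambda_k\eta_k\HH_k\bigr)$ rather than equality; equality follows (as the paper argues by contradiction) because strict inequality would force $\V=\0$, which is incompatible with your own subsequent step ${\rm{Tr}}(\V)=\tau_0P_{\max}>0$, so your argument already contains the needed fix.
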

\begin{proof}
 Please refer to  Appendix B.
\end{proof}

From  \eqref{tau0}, it is observed that when DL WPT is activated, the optimal  $\W$ is rank one and independent of the users' initial energy $E^{I}_k$. This suggests that to maximize WSR, the optimal energy beamforming direction does not depend on the initial energy levels of the devices,  $E_k^{\rm I}$. In fact, this also leads to a new kind of unfairness among the devices as devices with less initial energy may expect to harvest more energy during DL WPT.  {However, this problem can be resolved by adjusting the weights of the different devices in the objective function in a suitable manner. To illustrate this, we consider an extreme case. By setting the weight of device $k$ sufficiently larger than those of the other devices, i.e., $w_k \gg w_m$, $\forall m\neq k$, we have
{\begin{align}
\sum_{m=1}^{K} \frac{w_m\eta_m\gamma_m}{1 + p^*_m\gamma_m}\mathbf{H}_m\approx \frac{w_k\eta_k\gamma_k}{1 + p^*_k\gamma_k}\mathbf{H}_k
\end{align}}and $\W \approx  {P_{\max}}\h_k\h_k^{H}/{\|\h_k\|^2}$, which suggests that the PS steers its energy beam towards the exact direction of device $k$ so as to maximize its harvested energy, i.e., maximum-ratio transmission (MRT) is performed.  This suggests that the achievable rates of different devices can be balanced by imposing different weights based on the respective initial energy.}  Next, we study the behavior of the device's transmit power in such energy-constrained networks.
\begin{proposition}\label{device:transmitpower}
 When DL WPT is not activated, the transmit power of each device in UL WIT decreases and increases with  $T$ and ${E^{\rm I}_k}$, respectively. When DL WPT is activated, the transmit power of each device in UL WIT is independent of  $T$ and $E^{\rm I}_k$, $\forall k$.
\end{proposition}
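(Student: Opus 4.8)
The plan is to read the optimal uplink transmit powers straight off Theorem~\ref{theorem:2} and then run an elementary comparative-statics argument on the scalar equations that pin them down. When DL WPT is activated, combining \eqref{tauk}, the change of variables $\V=\tau_0\W$, and the tightness of C1 at the optimum (one of the two structural properties stated at the start of Section~III) yields $e^\star_k=\tau^\star_k p^\star_k$, so device $k$'s optimal transmit power equals $p^\star_k$, the root of \eqref{eq_activate402}. The system \eqref{eq_activate402} involves only $w_k$, $\eta_k$, $\gamma_k$, $\HH_k$, $p_{{\rm c},k}$ and $P_{\max}$: neither $T$ nor any $E^{\rm I}_j$ enters it, and its root is uniquely obtainable by bisection, as noted after Proposition~\ref{theom_activation} (using the monotonicity of $\mathcal{G}^{\rm on}_k$ in $p_k$ together with the common-term observation). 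Hence $p^\star_k$ is independent of $T$ and of $E^{\rm I}_k$, which is the second half of the claim.

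For the deactivated case, Theorem~\ref{theorem:2} gives $p^\star_k=(\mathcal{G}^{\rm off}_k)^{-1}(\delta^\star)$, where $\delta^\star$ is determined by $\sum_{k=1}^{K} E^{\rm I}_k/[(\mathcal{G}^{\rm off}_k)^{-1}(\delta^\star)+p_{{\rm c},k}]=T$. First I would record that $\mathcal{G}^{\rm off}_k$ is continuous and strictly increasing in $p_k$ (the same differentiation as in the proof of Proposition~\ref{monotonic1} applied to its two terms), so $(\mathcal{G}^{\rm off}_k)^{-1}$ is well defined, continuous and strictly increasing; devices with $E^{\rm I}_k=0$ have $\tau^\star_k=0$ by \eqref{tauk10} and do not transmit, so they may be discarded, and every surviving summand $E^{\rm I}_k/[(\mathcal{G}^{\rm off}_k)^{-1}(\delta)+p_{{\rm c},k}]$ is then strictly decreasing in $\delta$. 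Therefore $\Phi(\delta)\triangleq\sum_{k} E^{\rm I}_k/[(\mathcal{G}^{\rm off}_k)^{-1}(\delta)+p_{{\rm c},k}]$ is continuous and strictly decreasing, so $\delta^\star$ is its unique root.

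The two monotonicity statements then follow. Increasing $T$ with all other data fixed and using $\Phi(\delta^\star)=T$ with $\Phi$ strictly decreasing forces $\delta^\star$ down, and since $(\mathcal{G}^{\rm off}_k)^{-1}$ is increasing, $p^\star_k$ decreases in $T$. Increasing $E^{\rm I}_k$ with all other data fixed shifts $\Phi$ strictly upward pointwise, so restoring $\Phi(\delta^\star)=T$ requires $\delta^\star$ to move up, whence $p^\star_j=(\mathcal{G}^{\rm off}_j)^{-1}(\delta^\star)$ rises for every $j$, in particular for $j=k$. I expect the $E^{\rm I}_k$ case to be the only delicate point: unlike a change in $T$, perturbing $E^{\rm I}_k$ alters the defining function $\Phi$ itself rather than merely its target value, so the argument must go through the pointwise monotone shift of $\Phi$ combined with its strict monotonicity, and one should also check that the perturbed $\delta^\star$ still lies in the common range of the $\mathcal{G}^{\rm off}_k$'s, which holds because the perturbed problem remains feasible. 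Apart from that, the proof is routine.
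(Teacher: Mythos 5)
Your proof is correct and takes exactly the route the paper intends: the paper omits the argument ``for brevity'' and simply points to \eqref{eq_activate402} and \eqref{eq_activate23}, i.e., independence of $p^*_k$ from $T$ and $E^{\rm I}_k$ in the activated case, and comparative statics on $\delta^{\star}$ through $\Phi(\delta)=\sum_k E^{\rm I}_k/[(\mathcal{G}^{\rm off}_k)^{-1}(\delta)+p_{{\rm c},k}]=T$ in the deactivated case, which is precisely what you supply. The only slight looseness is claiming monotonicity of each of the two terms of $\mathcal{G}^{\rm off}_k$ separately (the second term alone can be decreasing when $p_{{\rm c},k}\gamma_k<1$), but their sum has derivative $w_k\gamma_k^2(p_k+p_{{\rm c},k})/[(1+p_k\gamma_k)^2\ln 2]>0$, so your monotone-inverse argument and hence the conclusion stand.
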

\begin{proof}
The result can be readily proved by analyzing the transmit powers in  \eqref{eq_activate23} and  \eqref{eq_activate402}, respectively, which is omitted for brevity.
\end{proof}

 Proposition \ref{device:transmitpower} suggests that when WPT is activated, there exists an optimal transmit power for each device. This is because for a smaller transmit power, more time is needed to deplete the total energy, which incurs more circuit energy consumption. For a larger transmit power, more energy needs to be harvested, which requires a longer time for DL WPT and in turn limits the time available for UL WIT and hence the WSR.
\begin{figure}[!t]
\centering
\includegraphics[width=2.8in]{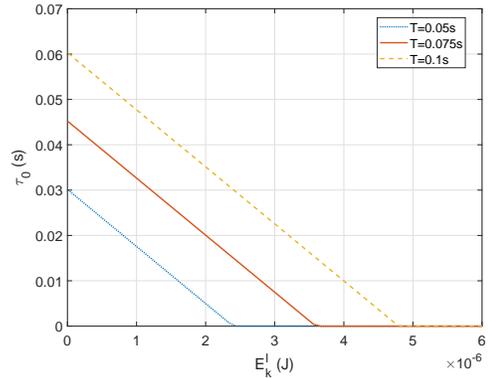}\vspace{-0.3cm}
\caption{DL WPT activation condition.}\label{fig:side:a}\vspace{-0.3cm}
\end{figure}

\begin{figure}[!t]
\centering
\includegraphics[width=2.8in]{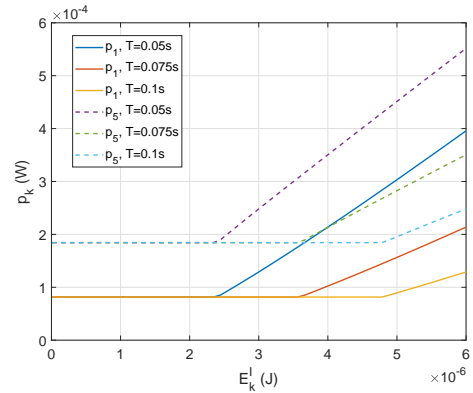}\vspace{-0.3cm}
\caption{Device transmit power versus initial energy.}\label{fig:side:c}\vspace{-0.3cm}
\end{figure}

\vspace{-0.3cm}
\section{Numerical Results }
In our simulations, the DL and UL channel gains are modeled as $10^{-3} \rho^2 d^{-\alpha}$, where $\rho^2$ is an exponentially distributed random variable with unit mean, $d$ is the link distance, and $\alpha$ is the path loss exponent. The path loss exponents of the DL and UL channels are set as  $2.2$ and $3$, respectively. The other parameters are set as follows: {$M=4$, $\eta_k=0.5$, $\forall k$\cite{mishra20172}}, $w_k=1$, $\forall k$,  $P_{\max}=40$ dBm, $E^{\rm I}_k = E^{\rm I}_m$, $\forall m \neq k$, and  $\sigma^2 = -117$ dBm,  unless specified otherwise.

\subsection{Illustration of DL WPT Activation Condition}
In Fig. \ref{fig:side:a}, we consider a concrete numerical example for 5 devices to illustrate the WPT activation in Proposition \ref{theom_activation}. We set $\rho^2=1$ for all the devices to show the optimal $\tau_0$ for different $E^{\rm I}_k$ and $T$. The DL and UL link distances are set as $10$ m and $90$ m, respectively.  {For the purpose of illustration, we  differentiate the 5 devices by adopting different device circuit powers. All other parameters are the same for all devices. The impact of the other parameters (e.g., $w_k$ and $\eta_k$) on the system performance can be studied in a similar manner.} Specifically, the circuit power of device $k$ is set as $p_{{\rm c},k} = 0.05k$ mW.   First, it is observed that for a given transmission period (e.g., $T=0.05$ s), the optimal $\tau_0$ is non-increasing as $E^{\rm I}_k$ increases. In particular, when $E^{\rm I}_k\geq 2.4\times 10^{-6}$ J, one can observe that $\tau_0= 0$ which means that WPT is no longer needed for  maximization of the WSR. In addition, for any given $E^{\rm I}_k$,  it is observed that  as $T$ becomes larger, the optimal $\tau_0$  increases accordingly. This is intuitive since a larger $T$ generally consumes more  circuit energy, which makes  WPT more likely to be activated. Correspondingly, the transmit powers of devices 1 and 5 are plotted in Fig. \ref{fig:side:c}. It is observed that when WPT is activated, the devices transmit with constant powers for UL WIT, even for different $E^{\rm I}_k$ and $T$. {In contrast, for a given $T$, the transmit power of a device is non-decreasing as  $E^{\rm I}_k$ increases, which verifies the result in Proposition \ref{device:transmitpower}.}

\subsection{Performance Comparison}
In Fig. \ref{fig:side:K}, we plot the average WSR versus the number of devices $K$ by setting $T=0.1$ s. For illustration, the WSR is normalized by $T$. The devices use the same weights $w_k=1$, $\forall k$, and are  randomly located at distances of 5-10 m away from the PS and $p_{{\rm c},k} = 0.1$ mW, $\forall k$. All other parameters are the same as in the previous example. {For comparison, we consider two benchmark schemes: 1) isotropic beamforming where  $\mathbf{W}=({P_{\max}/M})\mathbf{I}$ ($\mathbf{I}$ denotes the identity matrix) with the time allocation optimized to maximize the WSR; 2) fixed DL WPT where $\tau_0=0.5 T$ with   $\mathbf{W}$ optimized to maximize the WSR. First, it is observed that  for small $K$, the optimal solution significantly outperforms  isotropic beamforming, whereas the performance gap decreases as $K$ becomes larger. This is because a larger $K$ implies that there are more devices with available  energy. Accordingly, the DL WPT time is reduced to ensure a longer time for UL WIT, which shrinks the gain achievable by energy beamforming. On the other hand, by decreasing the initial energy at the battery,  the performance gap becomes more pronounced  for a given number of devices, which demonstrates the importance of energy beamforming for energy-limited networks.} Furthermore, one can observe that  the fixed time allocation scheme suffers from a substantial performance loss as compared to the proposed solution, which justifies the necessity of the time allocation optimization. {Finally, we show in Fig. \ref{fig:side:M} the WSR versus the number of antennas ($M$) for $K=5$, where we use the same parameters as for Fig. \ref{fig:side:K}. It is observed that the proposed solution significantly outperforms the benchmark schemes. In particular, since the isotropic beamforming scheme forces the PS  to radiate energy signal equally in all directions, the energy beamforming gain is not exploited and the WSR does not improve with the number of antennas.}

\begin{figure}[!t]
\centering
\includegraphics[width=2.8in]{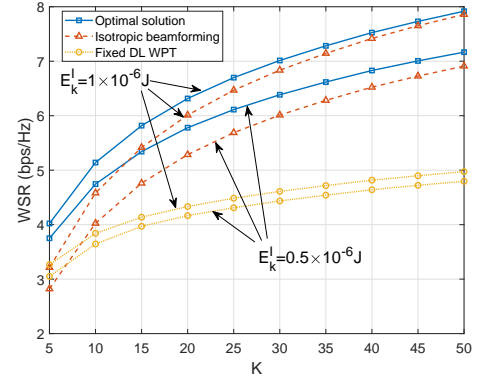}\vspace{-0.3cm}
\caption{WSR versus the number of devices. }\label{fig:side:K}\vspace{-0.3cm}
\end{figure}

\begin{figure}[!t]
\centering
\includegraphics[width=2.8in]{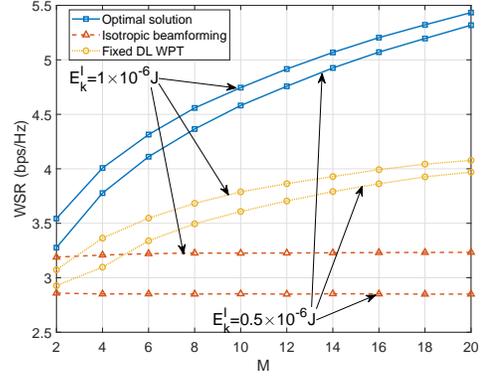}\vspace{-0.3cm}
\caption{WSR versus the number of antennas at the PS. }\label{fig:side:M}\vspace{-0.3cm}
\end{figure}

\section{Conclusions}
{In this paper, we have studied the WSR maximization problem for a generalized WPCN where the devices can exploit both  the energy harvested from DL WPT as well as their stored initial energy  to support UL WIT. In particular, we have answered two fundamental questions: when will WPT be activated and what is the optimal resource allocation? The obtained results shed light on how the optimal  energy beamformer and  transmit power maximizing the WSR depend on the system parameters. }

\appendices
\section*{Appendix A: Proof of Proposition 1}
Since  problem  (\ref{eq11}) is a convex optimization problem and also satisfies Slater's constraint qualification \cite{Boyd}, the duality gap between (\ref{eq11}) and its dual problem is zero. This implies that its optimal solution  can be obtained by analyzing the Karush-Kuhn-Tucker (KKT) conditions. Specifically, the Lagrangian function of \eqref{eq11} can be written as
\begin{align}
\mathcal{L} =&  \sum_{k=1}^{K} w_k\tau_k\log_2(1+\frac{e_k\gamma_k}{\tau_k})+\mu (\tau_0P_{\max}-{\rm{Tr}}(\V) )  \nonumber\\
&+  \sum_{k=1}^{K}\lambda_kE^{\rm I}_k   + \delta( T -  \tau_0 -\sum_{k=1}^{K}\tau_k)  \nonumber\\
& + \sum_{k=1}^{K}\lambda_k(\eta_k{\rm{Tr}}(\V \HH_k) - {e_k}-p_{{\rm{c}},k}\tau_{k}),
\end{align}
where $\lambda_k$, $\mu$,  and $\delta$ are the Lagrange multipliers associated with C1, C2, and C3, respectively.   By taking the partial derivative of $\mathcal{L}$ with respect to $\tau_0$, $\tau_k$, and  $e_k$, respectively, we  obtain
\begin{small}\begin{align}
\frac{\partial\mathcal{L}}{\partial \tau_0}&= \mu P_{\max} - \delta, \label{apdx_eq20}\\
\frac{\partial\mathcal{L}}{\partial \tau_k}&=w_k\log_2(1+\frac{e_k}{\tau_k}\gamma_k)-\frac{w_ke_k\gamma_k}{(\tau_k+e_k\gamma_k)\ln2}-\lambda_kp_{{\rm c},k}-\delta, \label{apdx_eq21}\\
\frac{\partial\mathcal{L}}{\partial e_k}&=\frac{w_k\tau_k\gamma_k}{(\tau_k+e_k\gamma_k)\ln2}-\lambda_k. \label{apdx_eq22}
\end{align}\end{small}{If DL WPT is activated for the optimal solution, i.e.,$\tau_0>0$, we have $\tau_k>0$ and $e_k>0, \forall k$. This implies  that $\frac{\partial\mathcal{L}}{\partial \tau_0}=0$, $\frac{\partial\mathcal{L}}{\partial \tau_k}=0$, and $\frac{\partial\mathcal{L}}{\partial e_k}=0$, respectively, which yields}
 {\begin{align}\label{apdx:eq_activate402}
 w_k\log_2\left(1 + p^*_k\gamma_k\right)-\frac{w_k(p^*_k+p_{{\rm c},k})\gamma_k}{ (1+p^*_k\gamma_k)\ln2} -\mu P_{\max}= 0.
 \end{align} }Furthermore, $\mathcal{L}$ can be rewritten as $\mathcal{L} ={\rm{Tr}}(\A\V) +\Delta \mathcal{L}$ where $\A = \B-\mu\I$,  $\B= \sum_{k=1}^{K}\lambda_k\eta_k\HH_k$, and $\Delta \mathcal{L}$ includes all terms in $\mathcal{L}$ that are independent of $\V$. Denote the maximum eigenvalue of  $\B$ as $\psi(\B)$.  To ensure that $\mathcal{L}$ is bounded from  above, we obtain  $\A\preceq \0$, which implies that $\psi(\B)\leq \mu$. However, for the case $\psi(\B)< \mu$, we have $\A\prec  \0$ and hence $\V=\0$, which contradicts that the WPT should be activated at the optimal solution. Thus, it follows that  $\psi(\B)= \mu$ and combining this with \eqref{apdx_eq22} and \eqref{apdx:eq_activate402} yields \eqref{eq_activate402}.  Since each device depletes all of its available energy, we have $\tau_k = \frac{  \eta_k\tau_0{\rm{Tr}}(\W \HH_k)  +E^{\rm I}_k}{p^*_k +p_{{\rm c},k} }$. Substituting $\tau_k$ into the total time constraint, i.e.,  $\tau_0 +\sum_{k=1}^{K}\tau_k = T$, yields
 \begin{align}\label{apdx:timeconstraint}
\tau_0\left(1+ \sum_{k=1}^{K}\frac{  \eta_k{\rm{Tr}}(\W \HH_k) }{p^*_k +p_{{\rm c},k} } \right)  +  \sum_{k=1}^{K}\frac{E^{\rm I}_kg_k}{p^*_k +p_{{\rm c},k}}  =T.
 \end{align}
As such,  the condition $T>  \sum_{k=1}^{K}\frac{E^{\rm I}_k}{p^*_k +p_{{\rm c},k}}$ in \eqref{eq_activate401} must hold  if DL WPT is activated, i.e., $\tau_0>0$. On the other hand, if  the condition in  \eqref{eq_activate401}  holds, a non-trivial $\tau_0>0$ can be obtained from the above, which satisfies the KKT conditions and is thus optimal for problem \eqref{eq11} due to its convexity.
\vspace{-0.2cm}

\section*{Appendix B: Proof of Theorem \ref{theorem:2}}\label{apdix_theorem2}
First, if  WPT  is activated, it follows from Appendix A that the optimal $\V$ maximizing the Lagrangian function $\mathcal{L}$ should satisfy the following KKT conditions:
\begin{align}
{\rm tr}(\A\V) = 0, \label{tracezero}\\
\mu (\tau_0P_{\max}-{\rm{Tr}}(\V) ) =0. \label{complementary}
\end{align}
Since $\A \preceq \0$ and $\V\succeq \0$, \eqref{tracezero} implies $\A\V =\0$ and hence $\V$ lies in the null space of $\A$. In addition, since  $\psi(\B) =\mu$, it follows that ${\rm{rank}}(\A)=M-1$ and the null space of $\A$ is spanned by the largest eigenvector of $\B$, denoted by
$\ve$, i.e., $\A\ve=\0$. As such, $\V$ can be written as $\beta \ve\ve^H$ where $\beta$ is a scaling factor for satisfying the maximum power constraint. Since $\mu>0$, \eqref{complementary} implies $\tau_0P_{\max}-{\rm{Tr}}(\V)=\tau_0P_{\max}-\beta{\rm{Tr}}(\ve\ve^H)=0$. Thus, we have $\beta=\tau_0P_{\max}$ and $\V= \tau_0P_{\max}\ve\ve^H$. Note that from $\V= \tau_0\W$, we can obtain $\W$ as in \eqref{tau0}. Substituting $\V$ into \eqref{apdx:timeconstraint} and combining this with the fact that C1 is satisfied with equality, the time allocation $\tau_0$ and $\tau_k$ can be readily obtained as in \eqref{tau0} and \eqref{tauk}, respectively. Second, if WPT is not activated, i.e., $\tau_0=0$,  we have $\frac{\partial\mathcal{L}}{\partial \tau_0}\leq 0$, $\frac{\partial\mathcal{L}}{\partial \tau_k}=0$, and $\frac{\partial\mathcal{L}}{\partial e_k}=0$, respectively, yielding
 {\begin{align}\label{apdx:eq:notactivate}
{ w_k\ln(1+p^{\star}_{k}\gamma_k)-\frac{w_k(p^{\star}_k+p_{{\rm c},k})\gamma_k}{(1+p^{\star}_k\gamma_k)\ln2}  = \delta^{\star},}
 \end{align}}where $\delta^{\star}$ is the optimal dual variable. Note that the time constraint also holds with equality at the optimal solution, i.e., $\sum_{k=1}^{K}\tau_k=\sum_{k=1}^{K}\frac{E^{\rm I}_k}{p^{\star}_k+ p_{{\rm{c}},k}  }=T$, which results in \eqref{tauk10} and \eqref{eq_activate23}.
\bibliographystyle{IEEEtran}
\bibliography{IEEEabrv,mybib}

\begin{thebibliography}{10}
\providecommand{\url}[1]{#1}
\csname url@samestyle\endcsname
\providecommand{\newblock}{\relax}
\providecommand{\bibinfo}[2]{#2}
\providecommand{\BIBentrySTDinterwordspacing}{\spaceskip=0pt\relax}
\providecommand{\BIBentryALTinterwordstretchfactor}{4}
\providecommand{\BIBentryALTinterwordspacing}{\spaceskip=\fontdimen2\font plus
\BIBentryALTinterwordstretchfactor\fontdimen3\font minus
  \fontdimen4\font\relax}
\providecommand{\BIBforeignlanguage}[2]{{%
\expandafter\ifx\csname l@#1\endcsname\relax
\typeout{** WARNING: IEEEtran.bst: No hyphenation pattern has been}%
\typeout{** loaded for the language `#1'. Using the pattern for}%
\typeout{** the default language instead.}%
\else
\language=\csname l@#1\endcsname
\fi
#2}}
\providecommand{\BIBdecl}{\relax}
\BIBdecl

\bibitem{wu2016overview}
Q.~Wu, G.~Y. Li, W.~Chen, D.~W.~K. Ng, and R.~Schober, ``An overview of
  sustainable green {5G} networks,'' \emph{IEEE Wireless Commun.}, vol.~24,
  no.~4, pp. 72--80, Aug. 2017.

\bibitem{niyato2017wireless}
D.~Niyato, D.~I. Kim, M.~Maso, and Z.~Han, ``Wireless powered communication
  networks: Research directions and technological approaches,'' \emph{IEEE
  Wireless Commun.}, vol.~24, no.~6, pp. 88--97, Dec. 2017.

\bibitem{wu2018spectral}
Q.~Wu, W.~Chen, D.~W.~K. Ng, and R.~Schober, ``Spectral and energy efficient
  wireless powered {IoT} networks: {NOMA} or {TDMA}?'' \emph{{IEEE} Trans. Veh.
  Technol.}, vol.~67, no.~7, pp. 6663--6667, Jul. 2018.

\bibitem{wu2018common}
Q.~Wu and R.~Zhang, ``Common throughput maximization in {UAV}-enabled {OFDMA}
  systems with delay consideration,'' \emph{IEEE Trans. Commun.}, vol.~66,
  no.~12, pp. 6614--6627, Dec. 2018.

\bibitem{sarigiannidis2017connectivity}
P.~Sarigiannidis, T.~Zygiridis, A.~Sarigiannidis, T.~D. Lagkas, M.~Obaidat, and
  N.~Kantartzis, ``Connectivity and coverage in machine-type communications,''
  in \emph{Proc. IEEE ICC}, 2017.

\bibitem{triantafyllou2018network}
A.~Triantafyllou, P.~Sarigiannidis, and T.~D. Lagkas, ``Network protocols,
  schemes, and mechanisms for {I}nternet of {T}hings ({IoT}): Features, open
  challenges, and trends,'' \emph{Wireless Communications and Mobile
  Computing}, Sep. 2018.

\bibitem{ju14_throughput}
H.~Ju and R.~Zhang, ``Throughput maximization in wireless powered communication
  networks,'' \emph{{IEEE} Trans. Wireless Commun.}, vol.~13, no.~1, pp.
  418--428, Jan. 2014.

\bibitem{liang2017online}
K.~Liang, L.~Zhao, K.~Yang, and X.~Chu, ``Online power and time allocation in
  {MIMO} uplink transmissions powered by {RF} wireless energy transfer,''
  \emph{{IEEE} Trans. Veh. Technol.}, vol.~66, no.~8, pp. 6819--6830, 2017.

\bibitem{boshkovska2017robust}
E.~Boshkovska, D.~W.~K. Ng, N.~Zlatanov, A.~Koelpin, and R.~Schober, ``Robust
  resource allocation for {MIMO} wireless powered communication networks based
  on a non-linear {EH} model,'' \emph{IEEE Trans. Commun.}, vol.~65, no.~5, pp.
  1984--1999, 2017.

\bibitem{zhang2018wireless}
G.~Zhang, J.~Xu, Q.~Wu, M.~Cui, X.~Li, and F.~Lin, ``Wireless powered
  cooperative jamming for secure {OFDM} system,'' \emph{{IEEE} Trans. Veh.
  Technol.}, vol.~67, no.~2, pp. 1331--1346, Feb. 2018.

\bibitem{vamvakas2017adaptive}
P.~Vamvakas, E.~E. Tsiropoulou, M.~Vomvas, and S.~Papavassiliou, ``Adaptive
  power management in wireless powered communication networks: A user-centric
  approach,'' in \emph{Proc. IEEE Sarnoff Symposium}, 2017.

\bibitem{zhou2019secure}
X.~Zhou, J.~Li, F.~Shu, Q.~Wu, Y.~Wu, W.~Chen, and L.~Hanzo, ``Secure {SWIPT}
  for directional modulation-aided {AF} relaying networks,'' \emph{IEEE J. Sel.
  Areas Commun.}, vol.~37, no.~2, pp. 253--268, Feb. 2019.

\bibitem{kalamkar2016resource}
S.~S. Kalamkar, J.~P. Jeyaraj, A.~Banerjee, and K.~Rajawat, ``Resource
  allocation and fairness in wireless powered cooperative cognitive radio
  networks,'' \emph{{IEEE} Trans. Commun.}, vol.~64, no.~8, pp. 3246--3261,
  Aug. 2016.

\bibitem{kang2018riding}
X.~Kang, Y.-C. Liang, and J.~Yang, ``Riding on the primary: A new spectrum
  sharing paradigm for wireless-powered {IoT} devices,'' \emph{IEEE Trans.
  Wireless Commun.}, vol.~17, no.~9, pp. 6335--6347, Sep. 2018.

\bibitem{qing15_wpcn_twc}
Q.~Wu, M.~Tao, D.~W.~K. Ng, W.~Chen, and R.~Schober, ``Energy-efficient
  resource allocation for wireless powered communication networks,''
  \emph{{IEEE} Trans. Wireless Commun.}, vol.~15, no.~3, pp. 2312--2327, Mar.
  2016.

\bibitem{zhang2016fundamental}
S.~Zhang, Q.~Wu, S.~Xu, and G.~Li, ``Fundamental green tradeoffs: Progresses,
  challenges, and impacts on {5G} networks,'' \emph{{IEEE} Commun. Surveys
  Tuts.}, 1st Quarter 2017.

\bibitem{akan2018internet}
O.~B. Akan, O.~Cetinkaya, C.~Koca, and M.~Ozger, ``Internet of hybrid energy
  harvesting things,'' \emph{IEEE Internet of Things Journal}, vol.~5, no.~2,
  pp. 736--746, Apr. 2018.

\bibitem{yildiz2018hybrid}
H.~U. Yildiz, V.~C. Gungor, and B.~Tavli, ``A hybrid energy harvesting
  framework for energy efficiency in wireless sensor networks based smart grid
  applications,'' in \emph{IEEE Ad Hoc Networking Workshop}, 2018.

\bibitem{qing15_cl}
Q.~Wu, W.~Chen, and J.~Li, ``Wireless powered communications with initial
  energy: Qo{S} guaranteed energy-efficient resource allocation,'' \emph{{IEEE}
  Commun. Lett.}, vol.~19, no.~12, pp. 2278 -- 2281, Dec. 2015.

\bibitem{mishra20172}
D.~Mishra and S.~De, ``i$^2$res: Integrated information relay and energy supply
  assisted {RF} harvesting communication,'' \emph{IEEE Trans. Commun.},
  vol.~65, no.~3, pp. 1274--1288, Mar. 2017.

\bibitem{park2018minimum}
J.~Park, H.~Lee, S.~Eom, and I.~Lee, ``Minimum throughput maximization in
  {UAV}-aided wireless powered communication networks,'' \emph{arXiv preprint
  arXiv:1801.02781}, 2018.

\bibitem{zeng2019accessing}
Y.~Zeng, Q.~Wu, and R.~Zhang, ``Accessing from the sky: A tutorial on uav
  communications for 5g and beyond,'' \emph{Proceedings of the IEEE}, vol. 107,
  no.~12, pp. 2327--2375, Dec. 2019.

\bibitem{suman2018uav}
S.~Suman, S.~Kumar, and S.~De, ``{UAV}-assisted {RF} energy transfer,'' in
  \emph{Proc. IEEE ICC}, 2018.

\bibitem{suman2018path}
------, ``Path loss model for {UAV}-assisted {RFET},'' \emph{IEEE Commun.
  Lett.}, vol.~22, no.~10, pp. 2048--2051, Oct. 2018.

\bibitem{wu2018joint}
Q.~Wu, Y.~Zeng, and R.~Zhang, ``Joint trajectory and communication design for
  multi-uav enabled wireless networks,'' \emph{IEEE Trans. Wireless Commun.},
  vol.~17, no.~3, pp. 2109--2121, Mar. 2018.

\bibitem{Boyd}
S.~Boyd and L.~Vandenberghe, \emph{Convex Optimization}.\hskip 1em plus 0.5em
  minus 0.4em\relax Cambridge University Press, 2004.

\end{thebibliography}

\end{document}